\definecolor{dullmagenta}{rgb}{0.5,0,0.4}
\definecolor{dullblue}{rgb}{0.0,0,0.86}
\newtheorem{theorem}{Theorem}[section]
\newtheorem{lemma}[theorem]{Lemma}
\titleformat*{\section}{\large \bfseries}
\titleformat*{\subsection}{\normalsize\bfseries}
\titlespacing\section{0pt}{12pt plus 4pt minus 2pt}{2pt plus 2pt minus 2pt}
\DeclareFontFamily{U}{bbold}{}
\DeclareFontShape{U}{bbold}{m}{n}
 {
  <-5.5> s*[1.069] bbold5
  <5.5-6.5> s*[1.069] bbold6
  <6.5-7.5> s*[1.069] bbold7
  <7.5-8.5> s*[1.069] bbold8
  <8.5-9.5> s*[1.069] bbold9
  <9.5-11> s*[1.069] bbold12 
  <11-15> s*[1.069] bbold12
  <15-> s*[1.069] bbold17
 }{}
\DeclareRobustCommand{\id}{%
  \text{\usefont{U}{bbold}{m}{n}1}%
}
\pretocmd{\blx@head@bibintoc}{\phantomsection}{}{\ddt}
\DeclareMathOperator{\tr}{Tr} 
\newcommand{\ketbra}[1]{\ket{#1}\bra{#1}}
\def\sD{\mathsf{D}}
\def\sH{\mathsf H}
\def\cC{\mathcal C}
\def\cG{\mathcal G}
\def\cN{\mathcal N}
\def\cP{\mathcal P}
\def\cS{\mathcal S}
\def\cX{\mathcal X}
\def\cY{\mathcal Y}
\def\cZ{\mathcal Z}
\def\df{\mathbb F}
\begin{document}

\title{Tight lower bound on the error exponent of classical-quantum channels}

\author{Joseph M.~Renes}
\affil{\small Institute for Theoretical Physics, ETH Z\"urich, 8093 Z\"urich, Switzerland}

\date{}
\maketitle

\renewcommand{\abstractname}{\vspace{-1.25\baselineskip}} 
\begin{abstract}
A fundamental quantity of interest in Shannon theory, classical or quantum, is the error exponent of a given channel $W$ and rate $R$: the constant $E(W,R)$ which governs the exponential decay of decoding error when using ever larger optimal codes of fixed rate $R$ to communicate over ever more (memoryless) instances of a given channel $W$.
Nearly matching lower and upper bounds are well-known for classical channels. 
Here I show a lower bound on the error exponent of communication over arbitrary classical-quantum (CQ) channels which matches Dalai's  sphere-packing upper bound [\href{https://doi.org/10.1109/TIT.2013.2283794}{IEEE TIT 59, 8027 (2013)}] for rates above a critical value, exactly analogous to the case of classical channels.
This proves a conjecture made by Holevo in his investigation of the problem [\href{https://doi.org/10.1109/18.868501}{IEEE TIT 46, 2256 (2000)}]. 

Unlike the classical case, however, the argument does not proceed via a refined analysis of a suitable decoder, but instead by leveraging a bound by Hayashi on the error exponent of the cryptographic task of  privacy amplification [\href{https://doi.org/10.1007/s00220-014-2174-y}{CMP 333, 335 (2015)}]. 
This bound is then related to the coding problem via tight entropic uncertainty relations and Gallager's method of constructing capacity-achieving parity-check codes for arbitrary channels.  
Along the way, I find a lower bound on the error exponent of the task of compression of classical information relative to quantum side information that matches the sphere-packing upper bound of Cheng et al.\ [\href{https://doi.org/10.1109/TIT.2020.3038517}{IEEE TIT 67, 902 (2021)}]. 
In turn, the polynomial prefactors to the sphere-packing bound found by Cheng et al.\ may be translated to the privacy amplification problem, sharpening a recent result by Li, Yao, and Hayashi [\href{http://doi.org/10.1109/TIT.2022.3217671}{IEEE TIT 69, 1680 (2023)}], at least for linear randomness extractors. 
\end{abstract}
\vspace{1mm}

\section{Introduction}
When communicating over a classical channel $W$ at a rate $R$ below the capacity, a good code family will have a decoding error probability which decays exponentially in the blocklength of the code. 
The optimal decay is characterized by the error exponent, the largest $E(W,R)$ such that the probability of error scales as $2^{-n\,E(W,R)}$ for blocklengths $n\to \infty$. 
This quantity is also known as the reliability function. 
Nearly matching lower and upper bounds on the error exponent for classical channels were first established by the lower bounds of Fano \cite{fano_transmission_1961} and Gallager~\cite{gallager_simple_1965} and the {sphere-packing} upper bound of Shannon, Gallager, and Berlekamp~\cite{shannon_lower_1967}. 
For a channel $W$ mapping a discrete input alphabet $\cX$ to discrete output alphabet $\cY$, these bounds take the form
\begin{equation}
\label{eq:classicalreliabilitybounds}
\max_{s\in[0,1]}E_0(s,W)-sR\,\leq\, E(W,R)\,\leq \,\sup_{s\geq 0} E_0(s,W)-sR\,,
\end{equation}
where $E_0(s,W)\coloneq \max_{P} E_0(s,P,W)$ and $E_0(s,P,W)\coloneq -\log \sum_{y\in \cY} \big(\sum_{x\in \cX} P(x) W(y|x)^{\nicefrac{1}{1+s}}\big)^{1+s}$. 
Here $W(y|x)$ are the channel transition probabilities and $P$ is a probability distribution. 

In this paper I show that the lower bound in \eqref{eq:classicalreliabilitybounds} also applies to arbitrary channels with a classical input but quantum output (CQ channels) for the auxiliary function 
\begin{equation}
\label{eq:quantumauxfunction}
E_0(s,P,W)\coloneq-\log \tr_B[\big(\sum_{x\in \cX} P(x) \varphi_B(x)^{\nicefrac{1}{1+s}}\big)^{1+s}]\,,
\end{equation}
where now the output of the channel $W$ for input $x$ is the quantum state $\varphi_B(x)$ on quantum system $B$. 
Finding a lower bound of this form has been an open question since Burnashev and Holevo initiated the study of CQ error exponents~\cite{burnashev_reliability_1998,holevo_reliability_2000}, defining $E_0$ as in \eqref{eq:quantumauxfunction}, and showing that the lower bound in \eqref{eq:classicalreliabilitybounds} indeed holds for channels with pure state outputs. 
Dalai showed that the upper bound in \eqref{eq:classicalreliabilitybounds} also holds for CQ channels in \cite{dalai_lower_2013}. 
Notably, he also found that the setting of CQ channels encompasses Lov\'asz's bound on the zero-error capacity of a channel~\cite{lovasz_shannon_1979}. 

Unlike more recent lower bounds on the error exponent by Hayashi~\cite{hayashi_error_2007,hayashi_universal_2009}, Dalai~\cite{dalai_note_2017}, and very recently by Beigi and Tomamichel~\cite{beigi_lower_2023-1} (which are not of the form \eqref{eq:classicalreliabilitybounds}), the method employed here does not proceed by analyzing a suitable decoder for a suitably-chosen code. 
It may seem surprising that it would be at all possible to bound $E(W,R)$ without doing so. 
However, it is well known that in the quantum setting the reliability of error correction is related to the secrecy of the task of privacy amplification.
This phenomenon can be expressed in terms of entropic uncertainty relations~\cite{renes_conjectured_2009,berta_uncertainty_2010,coles_uncertainty_2012} and is exploited in security proofs of quantum key distribution in~\cite{lo_unconditional_1999,shor_simple_2000}.
There, known results on quantum error correction are used to ensure the efficacy of privacy amplification, whereas our approach will be to take the opposite route. 

In \cite{renes_duality_2018} I showed a precise relation between the average error probability of communication over CQ channels using linear codes and a particular security parameter of an associated linear extraction function employed for privacy amplification of a certain ``dual'' CQ state. 
The security parameter is measured in terms of the fidelity (or equivalently, purified distance) to the nearest ideal output. 
Fortunately, there exist bounds on the decay of this security parameter in privacy amplification protocols which use linear extractors; particularly relevant here is a bound by Hayashi~\cite{hayashi_precise_2015}. 
Without further modification, though, linear codes only deliver the lowest error exponent for channels with sufficient symmetry.\footnote{An earlier version of this work reported on this case in more detail~\cite{renes_achievable_2023}.} 
However, combining the above results with Gallager's distribution shaping method~\cite[Section 6.2]{gallager_information_1968} and properties of the auxiliary function $E_0$ suffice to give the desired lower bound for arbitrary CQ channels.
Strangely, then, the current tightest random coding argument for CQ channels in the sense of the error exponent actually comes from analysis of privacy amplification! 

The more immediate relation in \cite{renes_duality_2018} is between privacy amplification and compression of classical data relative to quantum side information. 
This gives a lower bound on the error exponent of compression which matches the sphere-packing upper bound found by Cheng et al.~\cite{cheng_non-asymptotic_2021}.  
Furthermore, their sphere-packing bound can be translated into an upper bound on the exponential decay of the security parameter for privacy amplification based on linear extractors, which tightens the results of \cite{li_tight_2023}.
Hence the problem of determining the exponent of the security parameter at very low rates does have a combinatorial nature, as speculated in \cite{li_tight_2023}, as it is inherited from the combinatorial nature of the coding error exponent at low rates.  

The remainder of the paper is structured as follows. 
The following section provides the necessary mathematical setup, and then the proof of the main result is given in Section~\ref{sec:mainresult}. 
The main result depends on a more general statement, which is the subject of Section~\ref{sec:lemma}. 
Section~\ref{sec:compression} then examines its implications for compression with side information and privacy amplification.

\section{Mathematical setup}
\subsection{Entropies}
To establish these results first requires some preliminary mathematical setup. 
Recall the Umegaki relative entropy of two quantum states $\rho$ and $\sigma$ is given by $D(\rho,\sigma)\coloneq\tr[\rho(\log \rho-\log \sigma)]$. 
Here, and throughout, $\log$ denotes the base two logarithm. 
We require two versions of the R\'enyi relative entropy, one by Petz and the other the minimal version in a certain sense (see Tomamichel \cite{tomamichel_quantum_2016-1} for an overview). 
The Petz version of the R\'enyi relative entropy of order $\alpha\in \mathbb R$ is 
\begin{equation}
\bar D_\alpha(\rho,\sigma)\coloneq\tfrac{1}{\alpha-1}\log \,\tr[\rho^\alpha\sigma^{1-\alpha}]\,,
\end{equation}
while the minimal (or ``sandwiched'') version is 
\begin{equation}
\widetilde D_\alpha(\rho,\sigma)\coloneq\tfrac{1}{\alpha-1}\log \,\tr[(\sigma^{\tfrac{1-\alpha}{2\alpha}}\rho\sigma^{\tfrac{1-\alpha}{2\alpha}})^\alpha]\,.
\end{equation}
Observe that $\widetilde D_{\nicefrac 12}(\rho,\sigma)=-\log F(\rho,\sigma)^2$, where $F(\rho,\sigma)\coloneq \|\rho^{\nicefrac 12}\sigma^{\nicefrac 12}\|_1$ is the fidelity. 
It is known that $\lim_{\alpha\to 1}\widetilde D_\alpha(\rho,\sigma)=D(\rho,\sigma)$ and that $\alpha\mapsto \widetilde D_\alpha(\rho,\sigma)$ is monotonically increasing~\cite{muller-lennert_quantum_2013} (in fact, the same holds for $\bar D_\alpha$). 
Thus, we immediately have the bound
\begin{equation}
\label{eq:fidelityrelentropy}
F(\rho,\sigma)^2\geq 2^{-D(\rho,\sigma)}\,.
\end{equation}

From these two relative entropies we can define several conditional entropies of a bipartite state $\rho_{AB}$ which will be of use to us. For notational clarity, let $\cS_B$ be the set of quantum states on system $B$. The conditional entropies are 
\begin{align}
\bar H_\alpha^\uparrow(A|B)_\rho &\coloneq\max_{\sigma_B\in \cS_B}[-\bar D_\alpha(\rho_{AB},\id_A\otimes \sigma_B)]\,,\label{eq:HSibson}\\
\widetilde H_\alpha^\downarrow(A|B)_\rho &\coloneq-\widetilde D_\alpha(\rho_{AB},\id_A\otimes \rho_B)\,\\
\widetilde H_\alpha^\uparrow(A|B)_\rho &\coloneq\max_{\sigma_B\in \cS_B}[-\widetilde D_\alpha(\rho_{AB},\id_A\otimes \sigma_B)]\,.
\end{align}
(We will not need $\bar H_\alpha^\downarrow$.)
The optimal $\sigma_B^\star$ in \eqref{eq:HSibson} is known from the quantum Sibson identity~\cite{sharma_fundamental_2013}. 
It states that for all $\alpha\geq 0$, 
\begin{equation}
\label{eq:sibson}
\sigma_B^\star = \frac{(\tr_A[\rho_{AB}^\alpha])^{\nicefrac{1}{\alpha}}}{\tr[(\tr_A[\rho_{AB}^\alpha])^{\nicefrac{1}{\alpha}}]}\,.
\end{equation}
This form ensures that the auxiliary reliability function of a CQ channel and uniform input distribution can be related to the conditional Petz R\'enyi entropy of a suitable state. 
Fix a channel $W_{B|Z}$ which maps $z\in \cZ$ to $\varphi_B(z)\in \cS_B$. 
Consider the case of the uniform distribution $Q$ and define the state 
$\rho_{ZB}=\frac{1}{|\cZ|}\sum_{z\in \cZ}\ketbra{z}_Z\otimes \varphi_B(z)$. 
Then, for $E_0$ defined in \eqref{eq:quantumauxfunction}, 
\begin{equation}
\label{eq:renyireliability}
\bar{H}^\uparrow_\alpha(Z|B)_\rho=\log |\cZ|-\tfrac{\alpha}{1-\alpha}E_0(\tfrac{1-\alpha}\alpha,Q,W)\,.
\end{equation}
Equivalently, $E_0(s,Q,W)=s(\log |\cZ|-\bar H_{\nicefrac1{1+s}}^\uparrow(Z|B)_\rho)$.
To see this, it is convenient to let $\theta_B=\sum_{z\in \cZ}\frac{1}{|\cZ|^\alpha}\varphi_B(z)^\alpha$. 
Then according to \eqref{eq:sibson}, we have $\sigma_B^\star=\theta_B^{\nicefrac 1\alpha}/\tr[\theta_B^{\nicefrac1\alpha}]$.
From there a tedious calculation reveals $\bar{H}^\uparrow_\alpha(Z|B)_\rho=-\tfrac{\alpha}{\alpha-1}\log \tr[\theta_B^{\nicefrac 1\alpha}]$, which gives \eqref{eq:renyireliability} after putting the proper power of $\frac1{|\cZ|}$ inside the trace. 

For a CQ state $\rho_{ZB}=\sum_z P(z)\ketbra{z}_Z\otimes \varphi_B(z)$ with an arbitrary prior probability distribution $P$, the optimal probability of determining $Z$ by measuring $B$ is given by
\begin{equation}
P_{\mathrm{guess}}(Z|B)_\rho\coloneq \max_{\Lambda}\sum_{z\in \cZ}P(z)\tr[\Lambda_B(z)\varphi_B(z)]\,,
\end{equation}
where the optimization is over all POVMs $\Lambda$ with elements $\Lambda_B(z)$.
This quantity is directly related to the min-entropy $H_{\min}(A|B)_\psi\coloneq \max_{\lambda\in \mathbb R,\sigma_B\in \cS_B}\{\lambda:\psi_{AB}\leq 2^{-\lambda}\id_A\otimes \sigma_B\}$ by $P_{\mathrm{guess}}(Z|B)_\rho=2^{-H_{\min}(Z|B)_\rho}$~\cite{konig_operational_2009}. 
Moreover, the min-entropy is in fact one of the R\'enyi conditional entropies: $H_{\min}(A|B)_\psi=\widetilde H^\uparrow_\infty(A|B)_\psi$~\cite{muller-lennert_quantum_2013}. 
We will be interested in the probability of guessing the input of a CQ channel given its output, under the assumption that the input is uniformly distributed. 
Calling the channel $W_{B|X}$, we denote this probability by $P_{\mathrm{guess}}(W)$; it is equal to $P_{\textnormal{guess}}(X|B)_\rho$ for $\rho_{XB}=\tfrac1{|\cX|}\sum_x \ketbra{x}_X\otimes \varphi_B(x)$. 
We will also make use of $P_{\textnormal{error}}(W)\coloneq 1-P_{\textnormal{guess}}(W)$.

\subsection{Entropy duality}
\label{sec:entropyduality}
The R\'enyi conditional entropies satisfy a number of interesting and useful duality relations. For our purposes the following two are important.
For any pure state $\rho_{ABC}$~\cite{tomamichel_relating_2014},\cite{beigi_sandwiched_2013,muller-lennert_quantum_2013}
\begin{align}
\bar H_\alpha^\uparrow(A|B)_\rho+\widetilde H_{1/\alpha}^\downarrow(A|C)_\rho&=0\qquad \alpha\in[0,\infty]\,,\\
\widetilde H_\alpha^\uparrow(A|B)_\rho+\widetilde H_{\alpha/(2\alpha-1)}^\uparrow(A|C)_\rho&=0\qquad \alpha\in[\tfrac12,\infty]\,.
\end{align}
The conditional von Neumann entropy itself is self-dual. 
Entropy duality implies entropic uncertainty relations between conjugate observables~\cite{renes_conjectured_2009,berta_uncertainty_2010,coles_uncertainty_2012}. 
For a $d$-level quantum system $A$, let $\{\ket{z}\}_{z\in \mathbb Z_d}$ and $\{\ket{\widetilde x}\}_{x\in \mathbb Z_d}$ be two orthonormal bases of $\mathcal H_A$ such that $|\braket{\widetilde x|z}|^2=\tfrac1{d_A}$ for all $x,z\in \mathbb Z_d$.
Abusing notation somewhat, we also denote the random variables associated with outcomes of measuring in either bases by $Z_A$ and $X_A$, respectively. Denoting by $\cP_A$ ($\widetilde \cP_A$) the quantum channel which pinches $A$ in the $Z$ ($X$) basis, we write e.g.\ $\bar H^\uparrow_\alpha(Z_A|B)_\rho$ for $\bar H^\uparrow_\alpha(A|B)_{\cP_{A}[\rho_{AB}]}$. 
Then, for any quantum state $\rho_{ABC}$, we have
\begin{align}
\bar H_\alpha^\uparrow(Z_A|B)_\rho+\widetilde H_{1/\alpha}^\downarrow(X_A|C)_\rho&\geq \log d\,,\\
\widetilde H_\alpha^\uparrow(Z_A|B)_\rho+\widetilde H_{\alpha/(2\alpha-1)}^\uparrow(X_A|C)_\rho&\geq \log d\,.
\end{align}
In fact, these inequalities are saturated for certain quantum states, as detailed in \cite{renes_duality_2018}. 
For completeness, a concise self-contained discussion is given in Appendix~\ref{sec:duality}. 
In particular, for pure states $\psi_{AA'BC}$ with $A'\simeq A$ which are invariant under the action of the projector $\Pi_{AA'}\coloneq \sum_z \ketbra{z}_A\otimes \ketbra{z}_{A'}$, it holds that 
\begin{align}
\label{eq:entropyduality1}
\bar H_\alpha^\uparrow(Z_A|B)_\psi+\widetilde H_{1/\alpha}^\downarrow(X_A|A'C)_\psi&= \log d_A\,,\\
\widetilde H_\alpha^\uparrow(Z_A|B)_\psi+\widetilde H_{\alpha/(2\alpha-1)}^\uparrow(X_A|A'C)_\psi&=\log d_A\,.\label{eq:entropyduality2}
\end{align}
Once again, this equality also holds for the conditional von Neumann entropy. 
For the min-entropy, corresponding to $\alpha=\infty$ in the latter equation, the dual entropy is $\alpha=\nicefrac 12$, which is related to the fidelity. 
Using the relation of min-entropy to guessing probability, \eqref{eq:entropyduality2} implies that for pure states $\psi_{AA'BC}$,
\begin{equation}
\label{eq:PguessF}
P_{\mathrm{guess}}(Z_A|B)_\psi=\max_{\sigma_{A'C}\in \cS_{A'C}} F(\widetilde{\cP}_A[\psi_{AA'C}],\pi_A\otimes \sigma_{A'C})^2\,.
\end{equation}

\subsection{Duality of linear functions}
\label{sec:dualityfunctions}
Conjugate bases with additional algebraic structure have additional duality properties. 
The usual example is when the two bases are related by a discrete Fourier transform. 
Specifically, consider a system $A$ whose dimension $q$ is a prime integer and define $\ket{\widetilde x}=\tfrac1{\sqrt{q}}\sum_{z\in \df_q}\omega^{xz}\ket{z}$, where $\omega=e^{2\pi i/q}$. Then $|\braket{z|\widetilde x}|^2=\frac1q$, so the bases are indeed conjugate. 
Next consider $n$ copies of $A$, denoted $A^n$, which is shorthand for $A_1A_2\dots A_n$. 
The vectors $\{\ket{z^n}\coloneq\ket{z_1}\otimes \ket{z_2}\otimes \cdots \otimes \ket{z_n}\}_{z^n\in \df_q^n}$ form a basis, as do the vectors $\{\ket{\widetilde x^n}\}$. 
The vectors $\ket{\widetilde x^n}$ can be expressed as $\ket{\widetilde x^n}=\tfrac1{\sqrt{q^n}}\sum_{z^n\in \df_q^n}\omega^{x^n\cdot z^n}\ket{z^n}$, where $x^n\cdot z^n=\sum_{k=1}^n x_k z_k$. 
Therefore the bases are also conjugate, since $|\braket{z^n|\widetilde x^n}|^2=\tfrac1{q^n}$. 

Importantly, a unitary which implements an invertible function in one basis also implements an invertible function in the Fourier conjugate basis. 
More concretely, suppose that $f$ is an invertible linear map from $\df_q^n$ to itself and let $U$ be the unitary which maps $\ket{z^n}$ to $\ket{f(z^n)}$. 
Being linear, $f$ has a matrix representation as $f:z^n\mapsto Mz^n$, and so we have 
\begin{equation}
\begin{aligned}
U\ket{\tilde x^n}
&=\sum_{z^n\in \df_q^n} \ket{f(z^n)}\braket{z^n|\tilde x^n}
=\tfrac{1}{q^{n/2}}\sum_{z^n\in \df_q^n}\omega^{x^n\cdot z^n}\ket{Mz^n}
=\tfrac{1}{q^{n/2}}\sum_{z^n\in \df_q^n}\omega^{x^n\cdot M^{-1}(z^n)}\ket{z^n}\\
&=\tfrac{1}{q^{n/2}}\sum_{z^n\in \df_q^n}\omega^{(M^{-1})^Tx^n\cdot z^n}\ket{z^n}
=\ket{\widetilde {(M^{-1})^Tx^n}}.
\end{aligned}
\end{equation}
Therefore, the action of $U$ in the Fourier conjugate basis is the linear map $g:x^n\mapsto (M^{-1})^Tx^n$.

The above results also extend to surjective linear maps $\hat f: \df_q^n\to \df_q^k$ for $k<n$, in that the action of $\hat f$ on the basis $\ket{z^n}$ can also be understood as implementing a surjective linear map $\hat g:\df_q^n\to \df_q^k$ on the basis $\ket{\widetilde x^n}$. 
This follows straightforwardly because $\hat f$ can be extended to the invertible function $f$ with action $f:z^n \mapsto \hat f(z^n)\oplus \check f(z^n)$, where $\check f:\df_q^n\to \df_q^{n-k}$ is any function to the cosets of the kernel of $\hat f$ in $\df_q^n$, i.e.\  $\df_q^n/\textnormal{ker}(\hat f)$. 
In terms of the representative $M$ of $f$, $\hat f$ defines a rectangular matrix with linearly independent rows. 
Any such matrix can be extended with additional linearly independent rows forming a basis of $\df_q^n$, and the resulting matrix defines $M$. 
The first $k$ rows of $M$ are again $\hat f$, and then $\hat g$ is simply the first $k$ rows of $(M^{-1})^T$. 
Calling this matrix $G$, the action of $\hat g$ is $x^n\mapsto Gx^n$.
Both functions $\hat f$ and $\hat g$ are implemented by the unitary $U$. 
In this context, it is convenient to regard $U$ as a map from $A^n$ to the compound system $\hat A\check A$, where $\hat A$ is a system of $k$ qubits and $\check A$ is a system of $n-k$ qubits. 
Indeed, $\hat A$ is simply the first $k$ qubits of $A^n$, but the different labels help indicate which system is which. 

Surjective linear maps are closely related to linear and affine error-correcting codes. 
The approach we take here is to regard $\check f$ as the function which returns the syndrome of the input, i.e.\ the lower $n-k$ rows of $M$ are the parity-check matrix $H$ of the code. 
The rows of the corresponding $k\times n$ generator matrix are codewords of the code. 
By construction, the generator matrix of a linear code is simply the matrix $G$ defined above, because it satisfies $HG^T=0$. 
In contrast to $\hat g$, though, the action of the encoding operation is $m^k\mapsto m^k G$ (regarding $m^k$ as a row vector). 
For an affine code, the parity-checks of the codewords are not zero, but take some other value, say $s^{n-k}$.
Affine codes are simply cosets of linear codes in $\df_q^n$, and can be encoded by $m^k\mapsto m^kG + v^n$, where $v^n$ is a suitable coset leader satisfying $Hv^n=s^{n-k}$. 

Of particular relevance will be generator matrices $G$ which have the form $G=\begin{pmatrix} \id_k & T\end{pmatrix}$, where $T$ is a $k\times (n-k)$ Toeplitz matrix, meaning all entries along a given diagonal are identical. 
We will refer to the collection of $G$ of this form as modified Toeplitz matrices. 
The resulting codewords are systematic encodings of the message, along with a kind of convolution of the message.
Though here the convolution operation potentially involves the entire message, not just a limited portion of it. 

For $\hat g$ of modified Toeplitz form with Toeplitz matrix $T$, one choice of $\hat f$ and $\check f$ is given by $\hat f$ simply mapping $z^n$ to its first $k$ elements (matrix representation $\begin{pmatrix} \id_k & 0\end{pmatrix}$) and $\check f$ having matrix representation $\begin{pmatrix} -T & \id_{n-k}\end{pmatrix}$. 
This follows because $M=
\begin{pmatrix}
\id_{k} & T\\ 0 & \id_{n-k}
\end{pmatrix}$
is invertible and its inverse has the same form, but with $T$ replaced by $-T$.

\section{Main result: Lower bound on the error exponent of CQ channels}
\label{sec:mainresult}
To reliably transmit information of $n$ i.i.d.\ uses of a CQ channel $W_{B|Y}$ will generally require the use of a classical code $\cC$ mapping the message space $\mathcal M$ to $\mathcal Y^{\times n}$. 
This combination forms a new channel which we denote by $W^{\otimes n}\circ \cC$. 
Here we have overloaded notation somewhat to use $\cC$ to describe the code and the classical channel which implements the encoding function.

We are interested in the error probability at blocklength $n$ of the optimal code $\cC$: 
\begin{equation}
P_{\textnormal{error,min}}(W^{\otimes n})\coloneq \min_{\cC}P_{\textnormal{error}}(W^{\otimes n}\circ\cC)\,.
\end{equation}
With this quantity we can define the error exponent of the channel as 
\begin{equation}
E(W,R)\coloneq \limsup_{n\to \infty} [-\tfrac1n\log P_{\textnormal{error,min}}(W^{\otimes n})]\,.
\end{equation}
Now we can state and prove our main result. 
\begin{theorem}
\label{thm:main}
For an arbitrary CQ channel $W_{B|Y}$ whose input $Y$ comes from a discrete alphabet $\cY$ and whose output is a density operator on a quantum system $B$ with finite-dimensional state space, and any rate $R\geq 0$,
\begin{equation}
\label{eq:explowerbound}
E(W,R)\geq \sup_{s\in [0,1]} E_0(s,W)-sR\,.
\end{equation}
\end{theorem}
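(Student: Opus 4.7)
The plan is to bypass any direct analysis of a quantum decoder and instead route the argument through the task of privacy amplification. For a CQ channel with uniform input over a prime-sized alphabet, the natural code family is linear parity-check codes, and the author's duality result \cite{renes_duality_2018} relates their average decoding error probability to the fidelity-based security parameter of a privacy amplification protocol with a linear extractor acting on a suitable dual CQ state. Hayashi's exponential bound on this security parameter for Toeplitz-type extractors \cite{hayashi_precise_2015}, combined with the entropy duality \eqref{eq:entropyduality1} and the Sibson identity \eqref{eq:sibson}, will convert the privacy-amplification exponent into an expression involving $E_0(s,Q,W)$ on the primal side. Gallager's distribution shaping is then used to lift the uniform-input restriction.

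\textbf{Restricted case: prime alphabet, uniform input.}
Fix a prime $q$, a CQ channel $W_{B|Y}$ with $|\cY|=q$, and the uniform input $Q$. Purify $\rho_{YB}=\tfrac{1}{q}\sum_y\ketbra{y}_Y\otimes\varphi_B(y)$ to a tripartite state $\psi_{YY'B}$ that is $\Pi_{YY'}$-invariant in the sense of Section~\ref{sec:entropyduality}. For the $n$-fold tensor power, consider a linear code whose parity-check matrix is a modified Toeplitz $G$ as in Section~\ref{sec:dualityfunctions}. Invoke \cite{renes_duality_2018} to bound the average decoding error probability by a fidelity security parameter of the associated privacy-amplification protocol applied to the $X_{Y^n}$-basis record of the dual state. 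Apply Hayashi's bound \cite{hayashi_precise_2015}: its exponent is a supremum over $s\in[0,1]$ of a Petz R\'enyi conditional entropy of the dual state, minus $sR$. The duality \eqref{eq:entropyduality1} converts this into the corresponding $\bar H_\alpha^\uparrow$ of the primal state, and the identity \eqref{eq:renyireliability} then reexpresses it as $\sup_{s\in[0,1]}[E_0(s,Q,W)-sR]$. Finally, a Fuchs--van de Graaf-type inequality (in the spirit of \eqref{eq:fidelityrelentropy}) converts the fidelity security parameter back to an error probability while preserving the exponent, establishing the claimed bound in this restricted setting.

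\textbf{Extension to arbitrary input, and main obstacle.}
For an arbitrary input distribution $P$ on an arbitrary $\cY$, I would follow Gallager's distribution-shaping prescription: approximate $P$ by a rational distribution $P(y)=n_y/N$ and construct a deterministic shaping map $f:\hat\cY\to\cY$ with $|\hat\cY|=N$ (enlarged to a prime if necessary, with any unused symbols mapped arbitrarily) such that the uniform $\hat Q$ on $\hat\cY$ induces $P$ on $\cY$ after $f$. A direct calculation gives $E_0(s,\hat Q,W\circ f)=E_0(s,P,W)$, so applying the restricted case to $W\circ f$ at rate $R$ yields $E(W\circ f,R)\geq E_0(s,P,W)-sR$; and since any code for $W\circ f$ is also a code for $W$ (with encoder composed with $f$), one concludes $E(W,R)\geq E_0(s,P,W)-sR$. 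Taking the supremum over $P$ and over $s\in[0,1]$, together with a limit in the rational approximation justified by continuity of $E_0(s,\cdot,W)$, delivers the theorem. The hardest step is the restricted case: identifying the precise form of Hayashi's exponent, verifying that the primal-side translation via \eqref{eq:entropyduality1} and \eqref{eq:renyireliability} reproduces $E_0(s,Q,W)$ exactly rather than some asymmetric surrogate, and confirming that the range $s\in[0,1]$ in Hayashi's bound --- corresponding to the Petz parameter $\alpha\in[\tfrac12,1]$ --- aligns with the range in the desired inequality. The Gallager reduction is conceptually routine by comparison, although bookkeeping that the shaping map preserves the rate and the $E_0$ value simultaneously requires care.
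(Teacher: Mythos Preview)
Your plan matches the paper's proof: duality of linear coding to privacy amplification via \eqref{eq:PguessF}, Hayashi's bound \cite{hayashi_precise_2015}, entropy duality \eqref{eq:entropyduality1} together with the Sibson identity \eqref{eq:renyireliability}, and finally Gallager's distribution shaping with a prime-sized auxiliary alphabet growing with $n$. Two small corrections to your sequencing: Hayashi's Theorem~1 bounds the \emph{relative entropy} $D(\psi_{X_{\hat A}A'^nC},\pi_{\hat A}\otimes\psi_{A'^nC})$ and its exponent features the \emph{sandwiched} entropy $\widetilde H^\downarrow_{1+s}$ on the dual side (the Petz $\bar H^\uparrow_{1/(1+s)}$ appears only after applying \eqref{eq:entropyduality1}), and the passage from $P_{\textnormal{error}}$ to this relative entropy is the exact identity \eqref{eq:PguessF} followed by \eqref{eq:fidelityrelentropy} and $1-2^{-x}\le x$---so your ``Fuchs--van de Graaf'' step sits \emph{before} Hayashi, not after; also, the purification must carry a fourth system $C$ to absorb the mixedness of the $\varphi_B(y)$.
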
 

\begin{proof}
The proof has two main ingredients. The first is a bound on the error probability of channels when using affine codes, and the second is Gallager's distribution shaping method for mimicking arbitrary channel input distributions using linear codes~\cite[Section 6.2]{gallager_information_1968}. 
Let us begin with the latter. 

Given a distribution $P$ over an alphabet $\mathcal Y=\mathbb Z_r$ for some $r\in \mathbb N$, consider a quantization of $P$ to $q>r$ values, with $q$ prime: a distribution $P'$ such that $P'(y)=w_y/q$ for $w_y\in \mathbb N$ and $\sum_{y\in \cY} w_y=q$. 
According to \cite[Proposition 2]{bocherer_optimal_2016}, for every $P$ there exists a quantized version $P'$ such that the variational distance satisfies $\delta(P,P')\leq \nicefrac r{4q}$. 
Let $\cG_{Y|Z}$ be the classical channel that implements the function $b:\df_q\to \mathcal Y$ which maps the first $w_0$ values of $z\in \df_q$ to 0, the next $w_1$ values of $z\in \df_q$ to $1$, and so on. 

Now take $P$ to be an optimizer in $E_0(s,W)\coloneq\max_{P}E_0(s,P,W)$ and consider the channel $W'_{B|Z}=W_{B|Y}\circ \cG_{Y|Z}$. 
Crucially, its auxiliary function under the uniform distribution $Q$ is precisely that of $W$ under $P'$:
\begin{align}
E_0(s,Q,W')
&=-\log \tr[\big(\sum_z \tfrac1q \varphi_{B}(b(z))^{\nicefrac{1}{1+s}}\big)^{1+s}]
=-\log \tr[\big(\sum_y \tfrac{w_y}{q} \varphi_{B}(y)^{\nicefrac{1}{1+s}}\big)^{1+s}]=E_0(s,P',W)\,.
\end{align}
Moreover, $E_0$ is additive, so that $E_0(s,Q^{\otimes n},(W')^{\otimes n})=nE_0(s,P',W)$. 
Having chosen $q$ prime, we can apply the following to bound the error of coding over $\hat W=(W')^{\otimes n}$.
\begin{lemma}
\label{lem:oneshoterror}
For an arbitrary CQ channel $\hat W_{B|Z}$ with input alphabet $\df_q^n$ for $q$ a prime and whose outputs $\varphi_B(z)$ are density operators of a finite-dimensional quantum system $B$, define the state $\rho_{ZB}=\sum_{z\in \df_q^n} Q(z)\ketbra{z}_Z\otimes \varphi_B(z)$, where $Q$ is the uniform distribution. 
Then there exists an affine code $\cC$ of size $|\cC|=q^{m}$ whose generator matrix is a modified Toeplitz matrix such that  
for any $s\in [0,1]$, 
\begin{equation}
\label{eq:oneshoterror}
-\log P_{\mathrm{error}}(\hat W\circ \mathcal C)\geq E_0(s,Q,\hat W)-s\log |\cC|-s\log |\mathrm{spec}(\rho_{ZB})| -\log \tfrac1s \,.
\end{equation}
\end{lemma}
The lemma will be proven in the following section. 
Note that the bound holds for arbitrary $s\in [0,1]$ and code rate $R\coloneq \tfrac1n \log |\cC|$ (such that $|\cC|=q^m$ for integer $m$). 
For the problem at hand, the associated state $\rho_{Z^nB^n}$ is simply $\rho_{Z^nB^n}=(\rho_{ZB})^{\otimes n}$ for $\rho_{ZB}=\frac1q \sum_{z\in \df_q}\ketbra{z}\otimes \varphi_B\big(b(z)\big)$. 
By the usual type-counting arguments (e.g.~\cite[Theorem 11.1.1]{cover_elements_2006}), we have 
\begin{equation}
\log |\textnormal{spec}(\rho_{Z^nB^n})|\leq |\textnormal{spec}(\rho_{ZB})|\,\log (n+1)\leq r|B|\log (n+1)\,,
\end{equation}
where the second inequality holds due to the form of $\rho_{ZB}$. 
Nominally we would have $|\textnormal{spec}(\rho_{ZB})|\leq q|B|\log (n+1)$, but there are many degeneracies in $\rho_{ZB}$ by construction. 
Thus, for any $s\in [0,1]$, we have 
\begin{equation}
-\log P_{\mathrm{error}}((W\circ \cG)^{\otimes n}\circ \cC)\geq nE_0(s,P',W)-nsR-sr|B|\log (n+1) -\log \tfrac1s\,.
\end{equation}

The function $P\mapsto E_0(s,P,W)$ is continuous in $P$ for all $s\in [0,1]$ (see e.g.\ \cite[Proposition 9.1]{cheng_error_2018}). 
Therefore $E_0(s,P',W)-E_0(s,P,W)\leq \Delta$ for some quantity $\Delta$ which goes to zero when $\delta(P,P')\to 0$. 
Using the aforementioned bound $\delta(P,P')\leq \nicefrac r{4q}$, we can ensure that $\delta\to 0$ as $n\to \infty$ simply by taking $q=O(n)$ or even $q=O(\log n)$. 
Dividing through by $n$ gives
\begin{equation}
-\tfrac1n\log P_{\mathrm{error}}((W\circ \cG)^{\otimes n}\circ \cC)\geq E_0(s,P,W)-sR - \Delta-sr|B|\tfrac{\log (n+1)}n -\tfrac1n\log \tfrac1s\,.
\end{equation}
Given our choice of $P$, in the limit we obtain, for any $s\in[0,1]$ and any choice of $R$, 
\begin{equation}
\lim_{n\to\infty}[-\tfrac1n\log P_{\mathrm{error}}((W\circ \cG)^{\otimes n}\circ \cC)]\geq E_0(s,W)-sR\,.
\end{equation}
Now interpret $\cG^{\otimes n}\circ \cC$ as the coding scheme for $W^{\otimes n}$. 
Optimizing over $s$ gives the desired statement.
\end{proof}

We expect the bound to only be useful for $R<C(W)$, where $C(W)=\max_P I(Y:B)$ is the capacity of $W$. 
This follows from the bound $E_0(s,P,W)\leq sI(Y{:}B)$ established by Holevo~\cite[Proposition 1]{holevo_reliability_2000}. 
For any $R>C(W)$, the quantity $\sup_{s\in [0,1],P} E_0(s,P,W)-sR\leq \sup_{s\in [0,1],P} s(I(Y{:}B)-R)\leq 0$. 
Hence, when $R>C(W)$ the bound in \eqref{eq:explowerbound} is nonpositive.

Theorem~\ref{thm:main} reproduces the results found by Burnashev and Holevo \cite{burnashev_reliability_1998} for CQ channels with pure state outputs, as well as an earlier version of this work restricted to channels with suitable symmetries~\cite{renes_achievable_2023}. 
Moreover, \eqref{eq:explowerbound} compares favorably with Dalai's sphere packing upper bound~\cite[Theorem 5]{dalai_lower_2013},
\begin{equation}
E(W,R)\,\leq\, \sup_{s\geq 0} \,E_0(s,W)-sR\,.
\end{equation} 
Together, the upper and lower bounds reduce to the known results for classical channels. 

The proof of Lemma~\ref{lem:oneshoterror} makes use of a random-coding type argument in which the codes $\cC$ can be chosen to be modified Toeplitz matrices described at the end of Section~\ref{sec:dualityfunctions}. 
Therefore we establish that the lower bound in \eqref{eq:classicalreliabilitybounds} can be achieved not just by random codes but by highly structured codes.

\section{One-shot error probability bound}
\label{sec:lemma}
It remains to prove Lemma~\ref{lem:oneshoterror}. 
In this section we will establish a somewhat more general result. 
Let $\ket{\varphi(z)}_{BC}$ be a purfication of the channel output state $\varphi_B(z)$. 
Given any probability mass function $P$ over $\cZ$, define the pure state 
\begin{equation}
\ket{\psi'}_{A^nA'^nBC}=\sum_{z^n\in \df_q^n}\sqrt{P(z^n)}\ket{z^n}_{A^n}\ket{z^n}_{A'^n}\ket{\varphi(z^n)}_{BC}\,.
\end{equation}
For an invertible linear function $f$ on $\df_q^n$, let $U_A$ be the associated unitary operator as in Section~\ref{sec:dualityfunctions} and set $\ket{\psi}_{A^nA'^nBC}=U_A\ket{\psi'}_{A^nA'^nBC}$. 
Define $\hat A$ and $\check A$ so that $A^n\simeq \hat A\check A$. 
Then we have
\begin{lemma}
\label{lem:basicerrorbound}
For the state $\ket{\psi}_{A^nA'^nBC}$ just defined and $\hat f:\df_q^n\to \df_q^k$ with matrix representation $\begin{pmatrix}\id_k & 0\end{pmatrix}$, there exists a surjective function $\check f:\df_q^n\to \df_q^n/\textnormal{ker}(\hat f)$ and $s\in[0,1]$,
\begin{equation}
-\log P_{\textnormal{error}}(Z_{\hat A}|BZ_{\check A})_\psi \geq s(n-k)\log q-s\bar H^\uparrow_{\nicefrac 1{1+s}}(Z_{A^n}|B)_\psi-s\log |\textnormal{spec}(\psi_{A^nB})| -\log \tfrac1s\,.
\end{equation}
\end{lemma}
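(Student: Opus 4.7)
The strategy is to leverage the duality between decoding error on CQ channels and fidelity-based security in privacy amplification with linear extractors established in \cite{renes_duality_2018}, together with Hayashi's exponential bound for random Toeplitz hashing \cite{hayashi_precise_2015}. The lemma is engineered so that this framework applies in its sharpest form: the doubled register $A^n A'^n$ makes $\ket{\psi'}$ invariant under $\Pi_{A^n A'^n}$, so the conjugate-basis entropy duality relations of Section~\ref{sec:entropyduality} saturate as equalities.

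First I fix a concrete $\check f$ consistent with the prescribed $\hat f$. Taking $\hat f$ with matrix $\begin{pmatrix}\id_k & 0\end{pmatrix}$ and extending to the invertible $M=\begin{pmatrix}\id_k & 0\\ -T & \id_{n-k}\end{pmatrix}$ for a Toeplitz block $T$ to be chosen, one has $\check f$ with matrix $\begin{pmatrix}-T & \id_{n-k}\end{pmatrix}$. By the Fourier duality of Section~\ref{sec:dualityfunctions}, $U_A$ acts in the conjugate basis as $(M^{-1})^T$, whose first $k$ rows form the modified Toeplitz hash $\hat g=\begin{pmatrix}\id_k & T\end{pmatrix}$. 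The decoding problem ``guess $Z_{\hat A}$ from $(B,Z_{\check A})$'' is thereby placed into correspondence with the privacy amplification problem ``make $\hat g(X_{A^n})$ uniform and decoupled from side information on $A'^n$, $C$,'' where $Z_{\check A}$ is absorbed into $A'^n$ via the $Z$-correlation of $A^n$ and $A'^n$ in $\ket{\psi'}$.

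Next I convert $P_{\textnormal{error}}(Z_{\hat A}|BZ_{\check A})_\psi$ into the above fidelity-based security parameter. Because pure states remain conditionally pure under $Z$-basis pinching on $\check A$, applying \eqref{eq:PguessF} branchwise identifies $1-P_{\textnormal{error}}$ with an averaged maximum fidelity between $\widetilde\cP_{\hat A}[\psi^{\check z}]$ and $\pi_{\hat A}\otimes \sigma^{\check z}$; this coincides with the squared fidelity between the hashed state $\hat g(X_{A^n})$ and the ideal uniform-and-decoupled target. Hayashi's bound on this gap, averaged over random modified-Toeplitz $T$, yields an exponential decay controlled by a R\'enyi conditional entropy of order $1+s$ of $X_{A^n}$ given the conjugate side information, for any $s\in[0,1]$. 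The saturated entropy duality \eqref{eq:entropyduality1} then transports this quantity to the $Z$ basis, expressing it via $\bar H^\uparrow_{1/(1+s)}(Z_{A^n}|B)_\psi$; an intervening pinching step on $\psi_{A^n B}$, needed to pass from the ``sandwiched down'' form appearing in the duality to the ``Petz up'' form, contributes the factor $|\mathrm{spec}(\psi_{A^n B})|^s$ in the final bound. The residual $\log(1/s)$ comes from a standard Markov-type step that extracts a specific $T$ from the ensemble average.

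The principal obstacle is the identification in the second step: rigorously verifying that, under the duality, the classical side information $Z_{\check A}$ together with the quantum side information $B$ plays the role of the conjugate side information for extracting $\hat g(X_{A^n})$. This requires careful bookkeeping with the auxiliary register $A'^n$, the pinching operations, and the conjugate-basis measurement outcomes on $\check A$, and is the technical heart of the coding--privacy amplification duality developed in \cite{renes_duality_2018}. Once this correspondence is in place, the remainder of the proof reduces to routine manipulations with R\'enyi entropies, Hayashi's bound, and Markov's inequality.
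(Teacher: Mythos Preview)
Your overall strategy matches the paper's: use \eqref{eq:PguessF} to pass from $P_{\textnormal{guess}}(Z_{\hat A}|BZ_{\check A})$ to a fidelity-type decoupling quantity on the complementary systems $\hat A A'^n C$, invoke Hayashi's privacy amplification bound for modified Toeplitz hashing, and then use the saturated duality \eqref{eq:entropyduality1} to rewrite the resulting sandwiched entropy of $X_{A^n}$ given $A'^n C$ as $n\log q-\bar H^\uparrow_{1/(1+s)}(Z_{A^n}|B)$.

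However, your accounting for the two correction terms is wrong, and a key step is missing. First, the $|\mathrm{spec}(\psi_{A^nB})|^s$ factor does \emph{not} arise from a pinching step converting ``sandwiched down'' to ``Petz up'': the duality \eqref{eq:entropyduality1} is already an exact equality between $\widetilde H^\downarrow_{1+s}(X_{A^n}|A'^nC)_\psi$ and $n\log q-\bar H^\uparrow_{1/(1+s)}(Z_{A^n}|B)_\psi$, with no loss whatsoever. The spectrum factor is part of Hayashi's bound itself (Appendix~\ref{sec:hayashibound}), entering via the pinching of $\varphi_E(x)$ in the eigenbasis of $\rho_E$; it then reads $|\mathrm{spec}(\psi_{A'^nC})|^s$, which equals $|\mathrm{spec}(\psi_{A^nB})|^s$ because the global state is pure. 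Second, the $\log(1/s)$ is likewise built into Hayashi's bound (from $\log(\id+X)\leq X^s/s$), not from a Markov step: extracting one good $\hat g$ from the expectation costs nothing extra. Third, Hayashi's theorem bounds the \emph{relative entropy} $D(\widetilde\cP_{\hat A}[\psi_{\hat A A'^nC}],\pi_{\hat A}\otimes\psi_{A'^nC})$, not $1-F^2$ directly; the paper bridges these via $F^2\geq 2^{-D}$ (equation \eqref{eq:fidelityrelentropy}) and $1-2^{-x}\leq x$, which you have omitted. If you executed your outline literally you would either be unable to apply Hayashi's bound at the point you indicate or would double-count the spectrum factor.
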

\begin{proof}
Consider an arbitrary surjective linear $\hat f$ and $\check f$ pair. 
By \eqref{eq:PguessF} we have, for any conjugate basis of $\hat A$, 
\begin{equation}
P_{\textnormal{guess}}(Z_{\hat A}|BZ_{\check A})_\psi=\max_{\sigma_{C}\in \cS_C}F(\widetilde P_{\hat A}[\psi_{\hat AA'^nC}],\pi_{\hat A}\otimes \psi_{A'^nC})^2\,.
\end{equation}
Nominally, we ought to have $B\check A$ instead of $BZ_{\check A}$ in the conditional of the guessing probability. 
However, due to the form of the state, $\check A$ is diagonal in the $\ket{\check z}$ basis once $A'^n$ is traced out. 
In light of \eqref{eq:fidelityrelentropy}, we have 
\begin{equation}
P_{\textnormal{guess}}(Z_{\hat A}|BZ_{\check A})_\psi \geq 2^{-D(\widetilde P_{\hat A}[\psi_{\hat AA'^nC}],\pi_{\hat A}\otimes \psi_{A'^nC})}\,.
\end{equation}
And because $1-2^{-x}\leq x$ for $x\geq 0$, we obtain
\begin{equation}
\label{eq:decouplingstart}
P_{\textnormal{error}}(Z_{\hat A}|BZ_{\check A})_\psi\leq D(\widetilde P_{\hat A}[\psi_{\hat AA'^nC}],\pi_{\hat A}\otimes \psi_{A'^nC})\,.
\end{equation}

The righthand side of \eqref{eq:decouplingstart} is a quantification of how decoupled $\hat A$ is from $A'^nC$ in the state $\psi_{\hat AA'^nC}$. 
For it to be useful, though, we need to relate it to quantities involving the state on $A^nA'^nC$. 
Fortunately, this is possible if we choose $\ket{\widetilde x^n}$ to be the Fourier conjugate basis, as detailed in Section~\ref{sec:dualityfunctions}. 
Then we can make use of \cite[Theorem 1]{hayashi_precise_2015}, reproduced in Appendix~\ref{sec:hayashibound}. 
It implies that there exists a surjective function $\hat g:\df_q^n\to \df_q^{k}$ of modified Toeplitz form such that for all $s\in [0,1]$,
\begin{equation}
D(\widetilde P_{\hat A}[\psi_{\hat AA'^nC}],\pi_{\hat A}\otimes \psi_{A'^nC})\leq \tfrac 1s|\text{spec}(\psi_{A'^nC})|^s (q^{k})^s\,2^{-s\widetilde H^\downarrow_{1+s}(X_{A^n}|A'^nC)_\psi}\,.
\end{equation}
Setting $\hat f$ and $\check f$ from $\hat g$ as in Section~\ref{sec:dualityfunctions} and using the fact that the spectra of $\psi_{A^nB}$ and $\psi_{A'^nC}$ are equal since the overall state on $A^nA'^nBC$ is pure, we obtain
\begin{equation}
-\log P_{\textnormal{error}}(Z_{\hat A}|BZ_{\check A})_\psi \geq s\widetilde H^\downarrow_{1+s}(X_{A^n}|A'^nC)_\psi-sk\log q-s\log |\text{spec}(\psi_{A^nB})| -\log \tfrac1s\,.
\end{equation}
From \eqref{eq:entropyduality1} it follows that $\widetilde H^\downarrow_{1+s}(X_{A^n}|A'^nC)_\psi=n\log q-\bar H^\uparrow_{\nicefrac 1{1+s}}(Z_{A^n}|B)_\psi$, completing the proof.
\end{proof}

To establish Lemma~\ref{lem:oneshoterror}, observe that $P_{\textnormal{error}}(Z_{\hat A}|BZ_{\check A})_\psi$ is an average over $Z_{\check A}$ and therefore defines an average over a set of affine codes. The syndrome function of the codes is given by $\check f$. 
Then take $P$ to be the uniform distribution $Q$ and use \eqref{eq:renyireliability} to infer the existence of an affine code of size $q^m$ with 
\begin{equation}
-\log P_{\textnormal{error}}(W\circ \cC)_\psi \geq  
E_0(s, Q,W)-s\log q^m-s\log |\text{spec}(\psi_{A^nB})| -\log \tfrac1s\,.
\end{equation}

\section{Improved bounds for compression and privacy amplification}
\label{sec:compression}
Lemma~\ref{lem:basicerrorbound} immediately gives a lower bound on the error exponent of data compression with quantum side information, also known as information reconciliation, or quantum Slepian-Wolf coding. 
Recall that, for a given CQ state $\rho_{ZB}$, the goal is to compress $Z$ to a random variable $\check Z$ on a smaller alphabet such that $Z$ can be recovered from $\check Z$ along with the (quantum) side information stored in $B$ (e.g.\ by making a measurement on $B$). 
If we consider the i.i.d.\ case of $\rho_{ZB}^{\otimes n}$ and surjective linear functions to generate $\check Z$, then $Z^n$ is recoverable precisely when $\hat Z$ is. 
Meanwhile, the rate of the compression protocol is $R_{\textnormal{DC}}=\tfrac1n\log |\check A|$.  
Therefore, Lemma~\ref{lem:basicerrorbound} implies 
\begin{equation}
\label{eq:errorexpDC}
\lim_{n\to \infty}\tfrac{-1}n \log P_{\text{error}}(\hat Z|B^n\check Z)_{\rho^{\otimes n}} \geq \max_{\alpha \in [\nicefrac12,1]} \tfrac{1-\alpha}{\alpha}(R_{\textsc{dc}}-\bar H^\uparrow_{\alpha}(Z|B)_\rho)\,.
\end{equation}

The sphere-packing bound of Cheng et al.~\cite[Theorem 2]{cheng_non-asymptotic_2021} has very nearly the same form:
\begin{equation}
\label{eq:spherepackingDC}
\lim_{n\to \infty}\tfrac{-1}n \log P_{\text{error}}(\hat Z|B^n\check Z)_{\rho^{\otimes n}}\leq \sup_{\alpha \in [0,1]} \tfrac{1-\alpha}{\alpha}(R_{\textsc{dc}}-\bar H^\uparrow_{\alpha}(Z_A|B)_\rho)\,.
\end{equation}
Whenever the optimal $\alpha$ in the sphere-packing bound is at least one-half, the two bounds \eqref{eq:errorexpDC} and \eqref{eq:spherepackingDC} agree. From the analogous behavior for the bounds on the channel coding error exponent, we may surmise that this occurs for rates below a critical value, as at high enough rates zero-error compression potentially becomes possible. 

In fact, their Theorem 2 establishes not just an upper bound on the error exponent, but more generally a non-asymptotic lower bound on the error probability itself. 
Defining $E_{\textsc{sp}}(R)\coloneq \sup_{\alpha\in [0,1]}\tfrac{1-\alpha}\alpha(R-\bar H^\uparrow_\alpha(Z_A|A'B)_{\rho})$, they show that for large enough $n$ there is a constant $K$ such that 
\begin{equation}
\label{eq:spherepackingDC2}
-\tfrac{1}n \log P_{\text{error}}(\hat Z|B^n\check Z)_{\Psi'}\leq E_{\textsc{sp}}(R_{\textsc{dc}})+\tfrac12(1+|E_{\textsc{sp}}'(R_{\textsc{dc}})|)\tfrac{\log n}n+\tfrac Kn\,.
\end{equation}
Here $E'_{\textsc{sp}}$ denotes the derivative of the function $R\mapsto E_{\textsc{sp}}(R)$.  

This bound implies a non-asymptotic lower bound on the security parameter in privacy amplification, as we now show. 
Start with a general CQ state $\psi_{X_AC}=\sum_x P(x)\ketbra{\widetilde x}_A\otimes \theta_C(x)$ which specifies the input to the privacy amplification protocol. 
This state has a purification of the form  
\begin{equation}
\ket{\psi}_{AA'BC}=\sum_x \sqrt{P(x)}\ket{\widetilde x}_A\otimes \ket{\widetilde x}_{A'}\otimes \ket{\theta(x)}_{BC}\,,
\end{equation} 
where each $\ket{\theta(x)}_{BC}$ is a purification of the corresponding $\theta_{C}(x)$. 
Now consider the i.i.d.\ state $\ket{\Psi}_{A^nA'^nB^nC^n}=\ket{\psi}_{AA'BC}^{\otimes n}$. 
Choosing a surjective linear extractor function $\hat g$ and defining dual functions $\hat f$ and $\check f$, \eqref{eq:entropyduality2} implies (swapping $X_A\leftrightarrow Z_A$ and $B\leftrightarrow C$ therein) 
\begin{equation}
\max_{\sigma\in \cS_{C^n}} F(\Psi'_{X_{\hat A} C^n},\pi_{\hat A}\otimes \sigma_{C^n})^2 = P_{\textnormal{guess}}(Z_{\hat A}|A'^nB^n Z_{\check A})_\Psi\,.
\end{equation}
In terms of the purification distance $P(\rho,\sigma)\coloneq \sqrt{1-F(\rho,\sigma)^2}$ involving the actual marginal on $C^n$, this in turn implies 
\begin{equation}
P(\Psi_{X_{\hat A} C^n},\pi_{\hat X}\otimes \Psi_{C^n})^2 \geq P_{\text{error}}(Z_{\hat Z}|A'^nB^nZ_{\check A})_{\Psi}\,.
\end{equation}
Combining this bound with \eqref{eq:spherepackingDC2} gives
\begin{equation}
-\tfrac1n\log P(\Psi_{X_{\hat A} C^n},\pi_{\hat A}\otimes \Psi_{C^n})\leq  \tfrac12E_{\textsc{sp}}(R_{\textsc{dc}})+\tfrac14(1+|E_{\textsc{sp}}'(R_{\textsc{dc}})|)\tfrac{\log n}n+\tfrac K{2n}\,.
\end{equation}
We may express $E_{\textsc{SP}}(r)$ in terms of $\psi_{X_AC}$ using entropy duality as 
\begin{equation}
 E_{\textsc{SP}}(r)
 =\sup_{\alpha\geq 1}(\alpha-1)(r-\log q+\widetilde H^\downarrow_{\alpha}(X_A|C)_{\psi})
 \end{equation}
 and then define $E_{\textsc{SP-PA}}(r)\coloneq E_{\textsc{SP}}(\log q-r)$ so as to obtain 
\begin{equation}
E_{\textsc{SP-PA}}(R_{\textsc{pa}})=\sup_{\alpha\geq 1}(\alpha-1)(\widetilde H^\downarrow_{\alpha}(X_A|C)_{\psi'}-R_{\textsc{pa}})\,.
\end{equation}
Here we use the rate $R_{\textsc{pa}}\coloneq \tfrac1n\log |\hat A|$ of the privacy amplification protocol. 
Furthermore, $E'_{\textsc{SP-PA}}(r)=-E'_{\textsc{SP}}(\log q-r)$, and therefore we find (adjusting the constant $K$)
\begin{equation}
\label{eq:tighterpaupper}
-\tfrac1n\log P(\Psi_{X_{\hat A} C^n},\pi_{\hat A}\otimes \Psi_{C^n})\leq  \tfrac12E_{\textsc{sp-pa}}(R_{\textsc{pa}})+\tfrac14(1+|E_{\textsc{sp-pa}}'(R_{\textsc{pa}})|)\tfrac{\log n}n+\tfrac Kn\,.
\end{equation}
The first term in this expression gives the same $n\to\infty$ limit reported in \cite[Theorem 2]{li_tight_2023}; the additional terms give polynomial prefactors to the lower bound on the purification distance itself. Note that this bound is valid only for extractors based on surjective linear functions.

\section{Discussion}
We have established a lower bound to the CQ coding exponent which matches the form of the sphere-packing upper bound, just as in the classical case. 
This resolves the issue, which has been an open question since Burnashev and Holevo's initial investigations. 
However, the resolution does not directly proceed by analysis of a specific decoder, but instead takes a rather indirect route via duality arguments. 
It remains to find more direct coding theory arguments for the lower bound. 

\section*{Acknowledgments}
I thank Marco Tomamichel and Hao-Chung Cheng for useful discussions. 
This work was supported by the Swiss National Science Foundation through the Sinergia grant CRSII5\_186364, the National Center for Competence in Research for Quantum Science and Technology (QSIT), and the CHIST-ERA project No.\ 20CH21\_218782, as well as the Air Force Office of Scientific Research (AFOSR), grant FA9550-19-1-0202.

\section*{Related work}
After initial publication of this manuscript on the arXiv preprint server, the author learned of independent work by Li and Yang which takes a different approach to the problem using the method of types and permutation symmetry~\cite{li_reliability_2024}.

\appendix

\section{Duality statements}
\label{sec:duality}

We first establish two useful properties of conjugate bases. 
For fixed basis $\{\ket{z}\}$, let $U_{AA'}$ be the unitary with action $\ket{z}_A\otimes \ket{z'}_{A'}\mapsto \ket{z}_A\otimes \ket{z'+z}_{A'}$ and let $\Pi_{AA'}=\sum_{z} \ketbra{z}_A\otimes \ketbra{z}_{A'}$. 
Further, let $\mathcal S_{Z_AB}$ be the set of CQ density operators on $AB$ with $A$ classical in the $\ket{z}$ basis.
\begin{lemma}
Fix two orthonormal bases $\{\ket{z}\}_{z=0}^{d_A-1}$ and $\{\ket{\widetilde x}\}_{x=0}^{d_A-1}$ of $\mathcal H_A$ such that $|\braket{\widetilde x|z}|^2=\tfrac1{d_A}$ for all $x,z\in 0,\dots d_A-1$. Let $\mathcal P$ be the pinch map associated with the former and $\widetilde{\mathcal P}$ the pinch map associated with the latter. 
There exists a channel $\mathcal E_{AA'}$ such that for all $\theta_{AA'B}$ satisfying $\theta_{AA'B}=\Pi_{AA'}\theta_{AA'B}\Pi_{AA'}$, 
\begin{equation}
\label{eq:undoPx}
\mathcal E_{AA'}\circ\widetilde{\mathcal P}_A[\theta_{AA'B}]=\theta_{AA'B}\,.
\end{equation}
Furthermore, for any CQ state $\sigma_{A'B}=\cS_{Z_{A'}B}$, 
\begin{equation}
\label{eq:pxproj}
\widetilde{\cP}_{A}[\Pi_{AA'}(\id_A\otimes \sigma_{A'B})\Pi_{AA'}]=\pi_A\otimes \sigma_{A'B}\,.
\end{equation}
\end{lemma}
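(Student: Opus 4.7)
The plan is to dispatch the two claims separately, starting with the simpler identity \eqref{eq:pxproj} and then building up to the recovery channel in \eqref{eq:undoPx}. For \eqref{eq:pxproj}, I write $\sigma_{A'B}=\sum_z p(z)\ketbra{z}_{A'}\otimes \sigma_B(z)$ using the CQ structure and expand $\Pi_{AA'}=\sum_w \ketbra{w}_A\otimes \ketbra{w}_{A'}$. Only the $w=z$ cross-terms survive the double conjugation, so $\Pi_{AA'}(\id_A\otimes \ketbra{z}_{A'})\Pi_{AA'}=\ketbra{z}_A\otimes \ketbra{z}_{A'}$, and summing yields $\Pi_{AA'}(\id_A\otimes \sigma_{A'B})\Pi_{AA'}=\sum_z p(z)\ketbra{z}_A\otimes \ketbra{z}_{A'}\otimes \sigma_B(z)$. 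Applying $\widetilde{\mathcal P}_A$ then sends each pure state $\ketbra{z}_A$ to $\pi_A$ (which is exactly the content of the mutual-unbiasedness condition $|\braket{\widetilde x|z}|^2=1/d_A$), and $\pi_A$ factors out of the sum to produce $\pi_A\otimes \sigma_{A'B}$.

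For \eqref{eq:undoPx}, I parametrize any $\Pi_{AA'}$-invariant state as $\theta_{AA'B}=\sum_{z,z'}\ket{z}\bra{z'}_A\otimes\ket{z}\bra{z'}_{A'}\otimes c^B_{zz'}$. Taking $\{\ket{\widetilde x}\}$ to be the Fourier conjugate basis of $\{\ket{z}\}$ (consistent with Section~\ref{sec:dualityfunctions}), the pinching in the $X$-basis admits the expression $\widetilde{\mathcal P}_A[\cdot]=\tfrac{1}{d_A}\sum_{k}X_A^k\cdot X_A^{-k}$ as a uniform mixture of the cyclic shifts $X_A:\ket{z}\mapsto\ket{z+1}$, giving
\begin{equation*}
\widetilde{\mathcal P}_A[\theta_{AA'B}]=\tfrac{1}{d_A}\sum_{k,z,z'}\ket{z+k}\bra{z'+k}_A\otimes\ket{z}\bra{z'}_{A'}\otimes c^B_{zz'}\,.
\end{equation*}
The structural observation is that the residual shift $X_A^k$ introduced on $A$ by the pinching is labeled precisely by the $Z$-basis discrepancy $k=z_A-z_{A'}$ between $A$ and the still-intact $A'$, so $A'$ plays the role of a coherent syndrome for the error.

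I then construct $\mathcal E_{AA'}$ as a Stinespring isometry followed by a partial trace: (i) append an ancilla $A''$ in $\ket{0}$; (ii) apply the generalized CNOT $U_{AA''}$ to copy $A$'s $Z$-value additively into $A''$; (iii) apply $U_{A'A''}^\dagger$ to subtract $A'$'s $Z$-value, so that $A''$ stores the syndrome $k=z_A-z_{A'}$; (iv) apply the controlled shift $V_{A''A}:\ket{k}_{A''}\ket{a}_A\mapsto \ket{k}_{A''}\ket{a-k}_A$ to undo the error on $A$; (v) trace out $A''$. A termwise verification on the summand $\ket{z+k}\bra{z'+k}_A\otimes \ket{z}\bra{z'}_{A'}\otimes c^B_{zz'}$ shows that steps (ii)--(iv) transform it into $\ket{z}\bra{z'}_A\otimes \ket{z}\bra{z'}_{A'}\otimes c^B_{zz'}\otimes \ketbra{k}_{A''}$; summing over $k$ with the $\tfrac{1}{d_A}$ prefactor leaves $A''$ in the maximally mixed state $\pi_{A''}$ decoupled from $AA'B$, so the partial trace over $A''$ recovers $\theta_{AA'B}$ exactly. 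The main obstacle is simply the index-shift bookkeeping in steps (ii)--(iv): one must confirm that the two generalized CNOTs move $k$ cleanly to $A''$ and that the controlled shift makes the $A$-output independent of $k$ so that $A''$ truly decouples. CPTP-ness of $\mathcal E$ is then automatic from its Stinespring form.
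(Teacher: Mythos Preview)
Your treatment of \eqref{eq:pxproj} is correct and essentially identical to the paper's direct calculation.

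For \eqref{eq:undoPx} your construction differs from the paper's: you build $\mathcal E_{AA'}$ via a Stinespring dilation with controlled shifts (extract the syndrome $k=z_A-z_{A'}$ into an ancilla and then uncompute the shift on $A$), whereas the paper writes down explicit Kraus operators $K_{AA'}(x)=\sqrt{d_A}\sum_{z}\braket{z|\widetilde x}\,(\ket{z}\bra{\widetilde x})_{A}\otimes\ketbra{z}_{A'}$ and checks directly that $K_{AA'}(x)\,\Pi_{AA'}=d_A^{-1/2}\,\Pi_{AA'}$. Your termwise verification is correct in the Fourier setting.

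There is, however, a genuine gap in generality. The lemma is stated for an \emph{arbitrary} pair of mutually unbiased bases, but your argument specializes $\{\ket{\widetilde x}\}$ to the Fourier conjugate of $\{\ket z\}$. The step that fails in general is the decomposition $\widetilde{\mathcal P}_A[\cdot]=\tfrac1{d_A}\sum_k X_A^k(\cdot)X_A^{-k}$: this requires the cyclic shift $X_A$ to be diagonal in the $\{\ket{\widetilde x}\}$ basis, which is precisely the Fourier relation and need not hold for a generic MUB pair. Without it, the ``errors'' realizing the pinch are no longer $z$-basis permutations, so the discrepancy $z_A-z_{A'}$ no longer labels them and your syndrome-extraction recovery breaks down. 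The paper's Kraus-operator construction uses only the overlap condition $|\braket{\widetilde x|z}|^2=1/d_A$ and therefore covers the general case. For the downstream application in Lemma~\ref{lem:basicerrorbound} the Fourier basis is indeed the one used, so your restricted version would suffice there; but it does not prove the lemma as stated.
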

\begin{proof}
For the first statement, one choice for $\mathcal E_{AA'}$ is the channel with Kraus operators 
\begin{equation}
K_{AA'}(x)=\sqrt{d_A}\sum_{z} \braket{z|\widetilde x} (\ket{z}\bra{\widetilde x})_A\otimes (\ketbra{z})_{A'}\,.
\end{equation}
First let us confirm that these satisfy the normalization condition required for a quantum channel:
\begin{equation}
\begin{aligned}
\sum_x K_{AA'}(x)^* K_{AA'}(x)
&=d_A\sum_{xzz'}(\braket{\widetilde x|z'} (\ket{\widetilde x}\bra{z'})_A\otimes (\ketbra{z'})_{A'})(\braket{z|\widetilde x} (\ket{z}\bra{\widetilde x})_A\otimes (\ketbra{z})_{A'})\\
&=d_A\sum_{xz}|\braket{z|\widetilde x}|^2  \ketbra{\widetilde x}_A\otimes \ketbra{z}_A=\id_{AA'}\,.
\end{aligned}
\end{equation}
To see that the channel has the desired action, first observe that the Kraus operators of $\mathcal E_{AA'}\circ\widetilde P_A$ are again just $K_{AA'}(x)$. That is, $K_{AA'}(x)\ketbra{\widetilde x'}_A=\delta_{xx'}K_{AA'}(x)$. 
Furthermore, since the input state is assumed to be invariant under $\Pi_{AA'}$, we need only show that $K_{AA'}(x)\Pi_{AA'}$ is proportional to $\Pi_{AA'}$:
\begin{equation}
\begin{aligned}
K_{AA'}(x)\Pi_{AA'}
&=(\sqrt{d_A}\sum_{z'} \braket{z'|\widetilde x} (\ket{z'}\bra{\widetilde x})_A\otimes (\ketbra{z'})_{A'})(\sum_z \ketbra{z}_A\otimes \ketbra{z}_{A'})\\
&=\sqrt{d_A}\sum_z |\braket{z|\widetilde x}|^2\ket{z}\bra{z}_{A}\otimes \ketbra{z}_{A'}=\tfrac1{\sqrt{d_A}}\Pi_{AA'}\,.
\end{aligned}
\end{equation}

The second statement follows by direct calculation, using the fact that $\sigma_{A'B}=\sum_z \ketbra{z}_{A'}\otimes \varphi_B(z)$ for some set of subnormalized states $\varphi_B(z)$:
\begin{equation}
\begin{aligned}
\widetilde{\cP}_A[\Pi_{AA'}(\id_A\otimes \sigma_{A'B})\Pi_{AA'}]
&=\widetilde{\cP}_A[\sum_z \ketbra{z}_A\otimes \ketbra{z}_{A'}\otimes \varphi_B(z)]\\
&=\sum_{x,z} |\braket{z|\widetilde x}|^2\ketbra{\widetilde x}_A \otimes  \ketbra{z}_{A'}\otimes \varphi_B(z)
=\pi_A\otimes \sigma_{A'B}\,.
\end{aligned}
\end{equation}
This completes the proof.
\end{proof}

Following~\cite{coles_uncertainty_2012}, we consider a general relative entropy $\sD$ function on two positive semidefinite operators which satisfies the following three conditions for all positive semidefinite operators $\rho$ and $\sigma$:
\begin{enumerate}
  \item (\emph{Data processing}) For any quantum channel $\cN$, $\sD(\cN[\rho],\cN[\sigma])\leq \sD(\rho,\sigma)$, 
  \item (\emph{Null spaces}) For any positive semidefinite $\tau$, $\sD(\rho\oplus 0,\sigma\oplus \tau)=\sD(\rho,\sigma)$,
  \item (\emph{Normalization}) For any $c>0$, $\sD(\rho,c\sigma)=\sD(\rho,\sigma)+\log \tfrac1c$. 
\end{enumerate}
Using any such $\sD$, we define two conditional entropies $\sH^\uparrow$ and $\sH^\downarrow$ of bipartite quantum states by 
\begin{align}
\sH^\downarrow(A|B)_\rho &\coloneq -\sD(\rho_{AB},\id_A\otimes \rho_B)\,,\\
\sH^\uparrow(A|B)_\rho &\coloneq \max_{\sigma_B\in \cS_B} [-\sD(\rho_{AB},\id_A\otimes \sigma_B)]\,.
\end{align}
For each conditional entropy $\sH$, we define the dual $\widehat \sH$ by $\widehat H(A|B)_\rho\coloneq-\sH(A|C)_\rho$ for $\rho_{ABC}$ a pure state.

\begin{theorem}
For any state $\rho_{AC}$ let $\ket{\rho}_{ABC}$ be a purification and define $\ket{\psi}_{AA'BC}=U_{AA'}\ket{\rho}_{ABC}\ket{0}_{A'}$. 
Then for any two conjugate bases $X_A$ and $Z_A$ and any conditional entropy $\sH$  we have 
\begin{equation}
\widehat{\sH}(Z_A|C)_\rho+\sH(X_A|A'B)_{\psi}=\log d_A\,.
\end{equation}
\end{theorem}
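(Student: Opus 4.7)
\emph{Proof plan.} The plan is to first identify $\ket\psi_{AA'BC}$ as the canonical purification of the $Z_A$-pinched marginal of $\rho_{AC}$. Writing $\ket\rho_{ABC}=\sum_z\ket z_A\otimes\ket{\beta_z}_{BC}$ in the $Z$-basis on $A$, the unitary $U_{AA'}$ copies the $Z$-label to $A'$, producing $\ket\psi=\sum_z\ket{z,z}_{AA'}\otimes\ket{\beta_z}_{BC}$, which is $\Pi_{AA'}$-invariant and already $Z_A$-classical. The definition of $\widehat{\sH}$ through purification then gives $\widehat{\sH}(Z_A|C)_\rho=-\sH(Z_A|A'B)_\psi$, so the theorem reduces to the symmetric identity
\begin{equation*}
\sH(X_A|A'B)_\psi-\sH(Z_A|A'B)_\psi=\log d_A.
\end{equation*}

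Next I would derive two sibling identities by using the isometry $V\colon\mathcal H_{A'}\to\mathcal H_A\otimes\mathcal H_{A'}$, $V\ket z_{A'}\coloneq\ket{z,z}_{AA'}$, whose image is $\mathrm{range}(\Pi_{AA'})$ and which satisfies $\psi_{AA'B}=V\rho_{A'B}V^{*}$ (with $\rho_{A'B}$ a relabeling of $\rho_{AB}$). For any $\sigma_{A'B}\in\cS_{Z_{A'}B}$ the operator $\id_A\otimes\sigma_{A'B}$ is block-diagonal with respect to $\Pi_{AA'}$ and its complement, with top block $V\sigma_{A'B}V^{*}$. The null-spaces axiom together with isometric invariance of $\sD$ (a consequence of two-sided data processing) then yields
\begin{equation*}
\sD\bigl(\psi_{AA'B},\id_A\otimes\sigma_{A'B}\bigr)=\sD(\rho_{A'B},\sigma_{A'B}).
\end{equation*}
For the $X_A$-pinched version I would apply Lemma~2 to rewrite $\id_A\otimes\sigma_{A'B}=d_A\,\widetilde{\cP}_A[\Pi_{AA'}(\id_A\otimes\sigma_{A'B})\Pi_{AA'}]$, use the normalization axiom to extract $-\log d_A$, and invoke Lemma~1 to reversibly undo $\widetilde{\cP}_A$ on both arguments (each being $\Pi_{AA'}$-invariant), arriving at
\begin{equation*}
\sD\bigl(\widetilde{\cP}_A[\psi_{AA'B}],\id_A\otimes\sigma_{A'B}\bigr)=\sD(\rho_{A'B},\sigma_{A'B})-\log d_A.
\end{equation*}

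Finally I would read off the theorem by comparing the two identities at the same $\sigma_{A'B}$. For $\sH^\downarrow$ the conditioning state is forced to be $\psi_{A'B}=\cP_{A'}[\rho_{A'B}]$, which is automatically CQ in $Z_{A'}$, so both identities apply at $\sigma=\psi_{A'B}$ and their difference is exactly $\log d_A$. For $\sH^\uparrow$ the main obstacle is that a direct data-processing reduction with $\cP_{A'}$ alone does not suffice for $\sH^\uparrow(X_A|A'B)_\psi$, because $\cP_{A'}$ does not fix $\widetilde{\cP}_A[\psi_{AA'B}]$. The workaround is to exploit the Schmidt symmetry of $\ket\psi$ under conjugation by $Z^a_A\otimes Z^{-a}_{A'}$ for all $a$ (with $Z^a\ket z=\omega^{az}\ket z$); because $\widetilde{\cP}_A$ commutes with $Z^a$-conjugation on $A$ (as $Z^a$ merely permutes the $X$-basis), this invariance passes to $\widetilde{\cP}_A[\psi_{AA'B}]$ as well. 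Averaging then defines the channel $\mathcal T\coloneq\tfrac1{d_A}\sum_a(Z^a_A\otimes Z^{-a}_{A'})(\cdot)(Z^a_A\otimes Z^{-a}_{A'})^{*}$, which fixes both $\psi_{AA'B}$ and $\widetilde{\cP}_A[\psi_{AA'B}]$ while sending $\id_A\otimes\sigma_{A'B}\mapsto\id_A\otimes\cP_{A'}[\sigma_{A'B}]$. Data processing under $\mathcal T$ then shows that replacing $\sigma_{A'B}$ by its CQ projection can only improve either maximization, so the supremum is attained on CQ states. Since for CQ $\sigma$ the two identities differ exactly by $\log d_A$, the $\log d_A$ gap between the $\sH^\uparrow$ entropies follows, and the entire argument rests only on the three axiomatic properties of $\sD$ together with the two lemmas on conjugate bases.
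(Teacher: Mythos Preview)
Your argument parallels the paper's closely: both rest on the null-spaces axiom, the reversibility of $\widetilde{\cP}_A$ on $\Pi_{AA'}$-invariant operators (the channel $\mathcal E_{AA'}$), and the identity $\widetilde{\cP}_A[\Pi_{AA'}(\id_A\otimes\sigma)\Pi_{AA'}]=\pi_A\otimes\sigma$ for CQ $\sigma$. The isometry $V$ is a pleasant repackaging but not a different idea. One notational slip: the reduction to ``$\sH(X_A|A'B)_\psi-\sH(Z_A|A'B)_\psi=\log d_A$'' is not literally right, since $\psi_{AA'B}=\sum_{z,z'}\ket{z,z}\bra{z',z'}_{AA'}\otimes\tr_C[\ket{\beta_z}\bra{\beta_{z'}}]$ retains coherences and is not $Z_A$-diagonal. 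What duality actually gives is $\widehat\sH(Z_A|C)_\rho=-\sH(A|A'B)_\psi$ (no pinch on the right), and your identity for $\sD(\psi_{AA'B},\id_A\otimes\sigma)$ is indeed about the unpinched state, so the $\sH^\downarrow$ argument goes through once the target is stated as $\sH(X_A|A'B)_\psi-\sH(A|A'B)_\psi=\log d_A$.

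The substantive gap is in the $\sH^\uparrow$ case. Your twirl $\mathcal T=\tfrac1{d_A}\sum_a(Z^a_A\otimes Z^{-a}_{A'})(\cdot)(Z^a_A\otimes Z^{-a}_{A'})^*$ requires $Z^a$-conjugation to commute with $\widetilde{\cP}_A$, which you justify by ``$Z^a$ merely permutes the $X$-basis.'' That is a Fourier property; for \emph{arbitrary} conjugate bases (which is what the theorem assumes) $Z^a\ket{\widetilde x}$ need not be proportional to any $\ket{\widetilde{x'}}$, so $\mathcal T$ need not fix $\widetilde{\cP}_A[\psi_{AA'B}]$ and the argument fails. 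The paper instead restricts to $\sigma\in\cS_{Z_{A'}B}$ already at the fourth step---legitimate because $\Pi_{AA'}(\id_A\otimes\sigma)\Pi_{AA'}$ depends only on $\cP_{A'}[\sigma]$ via null spaces---and carries this restricted infimum through to the end. A basis-agnostic way to close the loop in either presentation is to use the recovery channel $\mathcal E_{AA'}$ itself: a direct computation with its Kraus operators gives $\mathcal E_{AA'}[\pi_A\otimes\sigma_{A'B}]=\Pi_{AA'}(\id_A\otimes\cP_{A'}[\sigma])\Pi_{AA'}$ for \emph{any} $\sigma$, so data processing under $\mathcal E_{AA'}$ together with your CQ identity yields $\sD(\widetilde{\cP}_A[\psi_{AA'B}],\pi_A\otimes\sigma)\geq\sD(\widetilde{\cP}_A[\psi_{AA'B}],\pi_A\otimes\cP_{A'}[\sigma])$, and the optimizer in $\sH^\uparrow(X_A|A'B)_\psi$ may be taken classical in $Z_{A'}$ without any Fourier assumption.
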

\begin{proof}
For the case $\sH^{\downarrow}$, the theorem is proven by the following chain of equalities:
\begin{subequations}
\begin{align}
\widehat{\sH}(Z_A|C)_\rho
&=\widehat{\sH}(A|C)_{\mathcal P_A[\rho_{AC}]}\\
&=\widehat{\sH}(A|C)_\psi\\
&=-\sH(A|A'B)_\psi\\
&=\sD(\psi_{AA'B},\id_A\otimes \psi_{A'B})\\
&=\sD(\psi_{AA'B},\Pi_{AA'}(\id_A\otimes \psi_{A'B})\Pi_{AA'})\\
&=\sD(\widetilde{\cP}_A[\psi_{AA'B}],\widetilde{\cP}_A[\Pi_{AA'}(\id_A\otimes \psi_{A'B})\Pi_{AA'}])\\
&=\sD(\widetilde{\cP}_A[\psi_{AA'B}],\pi_A\otimes \psi_{A'B})\\
&=\log d_A-\sH(X_A|A'B)_\psi\,.
\end{align} 
\end{subequations}
The first equality is the definition of the conditional entropy of the measured state. 
Due to the form of $\ket{\psi}_{AA'BC}$, the marginal state $\psi_{AC}$ satisfies $\psi_{AC}=\cP_A[\rho_{AC}]$; therefore the second equality holds. 
The third is duality and the fourth the definition of the conditional entropy in terms of relative entropy. 
The fifth uses the null spaces property of the relative entropy. 
The two spaces in question are given by $\Pi_{AA'}$ and $\id_{AA'}-\Pi_{AA'}$. 
Note that $\psi'_{A'B}$ is a CQ state, and so $\id_A\otimes \psi'_{A'B}$ commutes with both projectors. 
Nominally the sixth should be an inequality by data processing, but equality holds due to~\eqref{eq:undoPx}. 
The seventh is \eqref{eq:pxproj} and the eighth and final equality is the scaling property and the definition of conditional entropy. 

For $\sH^\uparrow$ the argument is entirely similar. The fourth equality becomes 
\begin{equation}
-\sH^\uparrow(A|A'B)_\psi=\inf_{\sigma\in \cS_{Z_{A'}B}} \sD(\psi_{AA'B},\id_A\otimes \sigma_{A'B})\,,
\end{equation} and the remaining steps proceed as before, with the infimum used in the definition of the conditional entropy in the final step.  
\end{proof}

\section{One-shot privacy amplification bound}
\label{sec:hayashibound}
A family of (hash) functions $f:\cX\to \hat \cX$ is \emph{two-universal} when, for all $x_1,x_2\in \cX$, the probability under a uniformly-random choice of $f$ that $f(x_1)=f(x_2)$ even though $x_1\neq x_2$ is at most $1/|\hat \cX|$. 
In \cite[Appendix II]{hayashi_exponential_2011} it is shown that the family of functions defined by matrix action $x^n\to Gx^n$ for $G=\begin{pmatrix}\id & T\end{pmatrix}$ for $T$ a random Toeplitz matrix on $\df_q$ for $q$ prime is two-universal. 
\begin{theorem}[Theorem 1~\cite{hayashi_precise_2015}]
Consider a CQ state $\rho_{XE}$ and a two-universal family of hash functions $f:\cX\to \hat \cX$ with $M=|\cX|$. 
Then, for any $s\in [0,1]$, 
\begin{equation}
\mathbb E_f D(\rho_{\hat XE},\pi_{\hat X}\otimes \rho_E)\leq \tfrac1s|\textnormal{spec}(\rho_E)|^s |\hat\cX|^s2^{-s \widetilde H^\downarrow_{1+s}(X|E)_\rho}\,.
\end{equation}
\end{theorem}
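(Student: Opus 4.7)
The plan is to linearize the relative entropy through the sandwiched Rényi divergence, expand the resulting trace for the classical-quantum state $\rho_{\hat X E}$, and then exploit both two-universality of the hash family and a pinching argument with respect to $\rho_E$.

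First, monotonicity of $\widetilde D_\alpha$ in $\alpha$ gives $D(\rho,\sigma)\leq \widetilde D_{1+s}(\rho,\sigma)=\tfrac{1}{s}\log \widetilde Q_{1+s}(\rho,\sigma)$ with $\widetilde Q_{1+s}(\rho,\sigma)=\tr[(\sigma^{-s/(2(1+s))}\rho\sigma^{-s/(2(1+s))})^{1+s}]$. Combined with the scalar bound $\log x\leq x$ for $x\geq 1$, this yields $D(\rho,\sigma)\leq \tfrac{1}{s}\widetilde Q_{1+s}(\rho,\sigma)$. Since $\pi_{\hat X}=\id/|\hat\cX|$ factors cleanly out of the sandwiched expression, $\widetilde Q_{1+s}(\rho_{\hat X E},\pi_{\hat X}\otimes\rho_E)=|\hat\cX|^s\,\widetilde Q_{1+s}(\rho_{\hat X E},\id\otimes\rho_E)$, accounting for the $|\hat\cX|^s$ factor. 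Setting $A_x=\rho_E^{-s/(2(1+s))}\rho_E^x\rho_E^{-s/(2(1+s))}$, the remaining sandwiched quantity expands for the CQ state as $\widetilde Q_{1+s}(\rho_{\hat X E},\id\otimes\rho_E)=\sum_{\hat x}\tr[(\sum_{x:f(x)=\hat x}p_x A_x)^{1+s}]$.

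Operator concavity of $y\mapsto y^s$ on $[0,1]$ gives the key trace inequality $\tr[(A+B)^{1+s}]\leq\tr[A^{1+s}]+\tr[B^{1+s}]+\tr[AB^s]+\tr[BA^s]$ for PSD $A,B$. Iterating this over the preimages of $f$ and then taking $\mathbb{E}_f$ using two-universality $\Pr_f[f(x)=f(x')]\leq 1/|\hat\cX|$ for $x\neq x'$ yields
\begin{equation}
\mathbb E_f \widetilde Q_{1+s}(\rho_{\hat X E},\id\otimes\rho_E)\leq \sum_x p_x^{1+s}\tr[A_x^{1+s}]+\tfrac{1}{|\hat\cX|}\sum_{x\neq x'}p_x p_{x'}^s\tr[A_x A_{x'}^s].
\end{equation}
The diagonal term is exactly $2^{-s\widetilde H^\downarrow_{1+s}(X|E)_\rho}$, so it remains to bound the cross term by $|\textnormal{spec}(\rho_E)|^s$ times the diagonal. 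To this end, use $\sum_x p_x A_x=\rho_E^{1/(1+s)}$ to rewrite the cross sum as $\sum_{x'}p_{x'}^s\tr[\rho_E^{1/(1+s)} A_{x'}^s]$, and apply the pinching inequality $\rho_E^{x'}\leq v\,\cP_E[\rho_E^{x'}]$ with $v=|\textnormal{spec}(\rho_E)|$ together with operator monotonicity of $y\mapsto y^s$ on $[0,1]$: this gives $A_{x'}^s\leq v^s\,\tilde B_{x'}^s$ with $\tilde B_{x'}=\cP_E[\rho_E^{x'}]\rho_E^{-s/(1+s)}$ commuting with $\rho_E$. Standard Jensen-type manipulations and Hölder's inequality on the resulting commuting trace $v^s\,\sum_{x'}p_{x'}^s\tr[\cP_E[\rho_E^{x'}]^s\rho_E^{1-s}]$ then close the bound back to the diagonal.

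The main technical obstacle is confining the pinching factor to $v^s$ rather than $v^{1+s}$: applying $\rho_{\hat X E}\leq v\,\cP_E[\rho_{\hat X E}]$ to the whole sandwiched expression, followed by monotonicity of $X\mapsto \tr[X^{1+s}]$ under PSD ordering, would cost a factor $v^{1+s}$, exceeding the target by one power of $v$. Restricting the pinching argument to the $s$-th power factor $A_{x'}^s$, where operator monotonicity of $y\mapsto y^s$ actually applies, is the key move that preserves the correct exponent. Beyond this, all remaining steps reduce to standard trace manipulations and the simple collision property of two-universal hashing.
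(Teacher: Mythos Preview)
Your approach has a genuine gap at the very first step. Bounding $D\leq\widetilde D_{1+s}=\tfrac1s\log\widetilde Q_{1+s}$ and then using $\log x\leq x$ gives $D(\rho_{\hat XE},\pi_{\hat X}\otimes\rho_E)\leq\tfrac1s\widetilde Q_{1+s}(\rho_{\hat XE},\pi_{\hat X}\otimes\rho_E)$, but since $\widetilde D_{1+s}$ is nonnegative on pairs of states, $\widetilde Q_{1+s}(\rho_{\hat XE},\pi_{\hat X}\otimes\rho_E)\geq 1$ always. Hence your route can never produce a bound below $\tfrac1s$, whereas the target $\tfrac1s\, v^s|\hat\cX|^s2^{-s\widetilde H^\downarrow_{1+s}}$ is strictly less than $\tfrac1s$ precisely in the regime of interest $v|\hat\cX|<2^{\widetilde H^\downarrow_{1+s}(X|E)}$. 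Concretely, take $E$ trivial (so $v=1$), $P$ uniform on $\cX$, and any $|\hat\cX|<|\cX|$. For every $f$ one has $\widetilde Q_{1+s}(\rho_{\hat X},\id_{\hat X})=\sum_{\hat x}q_{\hat x}^{\,1+s}\geq|\hat\cX|^{-s}$, so $\mathbb E_f\,\widetilde Q_{1+s}(\rho_{\hat X},\id_{\hat X})\geq|\hat\cX|^{-s}>|\cX|^{-s}=v^s\,2^{-s\widetilde H^\downarrow_{1+s}}$. No subsequent manipulation of cross terms can repair this.

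The same example also breaks the step you leave to ``standard Jensen-type manipulations and H\"older.'' Your asserted bound on the cross term, $\tfrac1{|\hat\cX|}\sum_{x\neq x'}p_xp_{x'}^{\,s}\tr[A_xA_{x'}^{\,s}]\leq v^s\sum_xp_x^{\,1+s}\tr[A_x^{\,1+s}]$, becomes in this classical case $\tfrac{|\cX|-1}{|\hat\cX|}\,|\cX|^{-s}\leq|\cX|^{-s}$, which fails whenever $|\cX|>|\hat\cX|+1$. So the cross term simply is not controlled by the diagonal in the way you need.

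The paper's proof avoids the obstruction by never linearizing $D$ up front. It stays with the Umegaki form, uses operator concavity of $\log$ to pull $\mathbb E_f$ inside, applies two-universality \emph{there}, pinches to force commutativity, and only at the end converts to a power via the sharper inequality $\log(\id+X)\leq\tfrac1sX^s$. That last step is the key refinement: for small $X$ it yields $O(\|X\|^s)$ rather than the $O(1)$ floor your linearization incurs, and this is exactly what produces the correct $v^s|\hat\cX|^s$ scaling.
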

\begin{proof}
Note that the bound is trivially true for $s=0$, and therefore we need only consider $s\in (0,1]$. 
Suppose $\rho_{XE}=\sum_x \ketbra{x}_X\otimes \varphi_E(x)$ and $\rho_{\hat XE}=\sum_y \ketbra{y}\otimes \sum_{x:f(x)=y}\varphi_E(x)$, for unnormalized states $\varphi_E(x)$. 
First we whittle the expression down a bit. 
\begin{subequations}
\begin{align}
\mathbb E_f D(\rho_{\hat XE},\pi_{\hat X}\otimes \rho_E)
&=\mathbb E_f D(\sum_{y=1}^M \ketbra{y}\otimes \sum_{f(x)=y} \varphi_E(x)\,,\, \pi_{\hat X}\otimes \rho_E) \\
&=\mathbb E_f \sum_{y=1}^M \tr[ \sum_{f(x)=y} \varphi_E(x) \big( \log \sum_{f(x')=y} \varphi_E(x')-\log \tfrac 1M\rho_E\big)]\\
&=\mathbb E_f \sum_x \tr[\varphi_E(x) \big( \log \sum_{x':f(x')=f(x)} \varphi_E(x')-\log \tfrac 1M\rho_E\big)]\\
&=-\tfrac1M\tr[\rho_E \log\rho_E] + \mathbb E_f \sum_x \tr[\varphi_E(x) \log \sum_{x':f(x')=f(x)} \varphi_E(x')]
\end{align}
\end{subequations}
Then we use convexity of the logarithm to move the expectation inside the log in the second term:
\begin{subequations}
\begin{align}
\mathbb E_f &\sum_x \tr[\varphi_E(x) \log \sum_{x':f(x')=f(x)} \varphi_E(x')]\nonumber\\
&\leq \sum_x \tr[\varphi_E(x) \log \big(\mathbb E_f \sum_{x':f(x')=f(x)} \varphi_E(x')\big)]\\
&=\sum_x \tr[\varphi_E(x) \log \big(\varphi_E(x)+ \mathbb E_f \sum_{x'\neq x:f(x')=f(x)} \varphi_E(x')\big)]\\
&\leq \tr[\sum_x \varphi_E(x)\log \big(\varphi_E(x)+ \tfrac1M \sum_{x'\neq x} \varphi_E(x)\big)]\\
&\leq \tr[\sum_x \varphi_E(x)\log\big( \varphi_E(x)+ \tfrac1M \varphi_E\big)]\,.
\end{align}
\end{subequations}
The first inequality is convexity of log, the second is the 2-universal condition on the family of functions and monotonicity of log, and the final inequality is again monotonicity of log. 
Altogether we have 
\begin{equation}
\mathbb E_f D(\rho_{\hat XE},\pi_{\hat X}\otimes \rho_E) \leq \tr[\sum_x \varphi_E(x)\Big(\log\big( \varphi_E(x)+ \tfrac1M \varphi_E\big)-\log \tfrac1M \varphi_E\Big)]\,.
\end{equation}

Now the problem is that we cannot combine the $\log$ terms, since $\varphi_E$ does not necessarily commute with $\varphi_E(x)$.
So we pinch $\varphi_E(x)$ in the basis of $\varphi_E$. 
Denoting the pinched states with a bar and $\nu_E=|\textnormal{spec}(\rho_E)|$, this step gives  
\begin{equation}
\log\big( \varphi_E(x)+ \tfrac1M \varphi_E\big)
\leq \log\big( v_E \bar{\varphi}_E(x)+ \tfrac1M \varphi_E\big)\,,
\end{equation}
and therefore
\begin{equation}
\mathbb E_f D(\rho_{\hat XE},\pi_{\hat X}\otimes \rho_E) \leq \tr[\sum_x \varphi_E(x)\log\big( M v \bar\varphi_E(x)\varphi_E^{-1}+ \id_E\big)]\,.
\end{equation}

Next make use of the fact that $\log (\id+X)\leq \tfrac1sX^s$ for $s\in (0,1]$ and nonnegative $X$~\cite[Lemma 5]{hayashi_precise_2015} to get  
\begin{subequations}
\begin{align}
\mathbb E_f D(\rho_{\hat XE},\pi_{\hat X}\otimes \rho_E) 
&\leq \frac{v^sM^s}s\tr[\sum_x \varphi_E(x) \bar\varphi_E(x)^s\varphi_E^{-s}]\\
&= \frac{v^sM^s}s\tr[\sum_x \bar\varphi_E(x)^{1+s}\varphi_E^{-s}]\\
&=\frac{v^sM^s}s 2^{s \widetilde D_{1+s}(\bar \rho_{XE},\id_X\otimes \rho_E)}\\
&=\frac{v^sM^s}s 2^{-s \widetilde H_{1+s}^\downarrow (X|E)_{\bar{\rho}_{XE}}}\\
&\leq \frac{v^sM^s}s 2^{-s \widetilde H_{1+s}^\downarrow (X|E)_{{\rho}_{XE}}}\,.
\end{align} 
\end{subequations}
The inequality in the final step is data processing. 
\end{proof}

\printbibliography[heading=bibintoc,title={\large References}]

\end{document}